\newcommand{\ket}[1]{\vert{#1}\rangle} 
\newcommand{\bra}[1]{\langle{#1}\vert} 
\newcommand{\proj}[1]{\ket{#1}\!\bra{#1}}
\newcommand{\mean}[1]{\langle #1 \rangle}
\DeclareMathOperator{\Tr}{Tr}
\renewcommand\Im{\operatorname{Im }}
\newcommand{\beq}{\begin{equation}}
\newcommand{\eeq}{\end{equation}}
\newcommand{\be}{\begin{equation}}
\newcommand{\ee}{\end{equation}}
\newcommand{\ben}{\begin{eqnarray}}
\newcommand{\een}{\end{eqnarray}}
\newtheorem{theorem}{Theorem}
\begin{document}

\title{Heisenberg-Weyl Observables: Bloch vectors in phase space}

\author{Ali Asadian}
\affiliation{{Naturwissenschaftlich-Technische Fakult\"at, Universit{\"a}t Siegen, Walter-Flex-Str.~3, D-57068 Siegen, Germany}}
\author{Paul Erker}
\affiliation{Universitat Autonoma de Barcelona, 08193 Bellaterra, Barcelona, Spain}
\affiliation{Faculty of Informatics, Universit\`{a} della Svizzera italiana, Via G. Buffi 13, 6900 Lugano, Switzerland}
\affiliation{Facolt\`{a} indipendente di Gandria, Lunga scala, 6978 Gandria, Switzerland}
\author{Marcus Huber}
\affiliation{Group of Applied Physics, University of Geneva, 1211 Geneva 4, Switzerland}
\affiliation{Universitat Autonoma de Barcelona, 08193 Bellaterra, Barcelona, Spain}
\affiliation{ICFO-Institut de Ciencies Fotoniques, 08860 Castelldefels, Barcelona, Spain}
\affiliation{Institute for Quantum Optics and Quantum Information (IQOQI), Austrian Academy of Sciences, %Boltzmanngasse 3, 
A-1090 Vienna, Austria}
\author{Claude Kl\"ockl}
\affiliation{Universitat Autonoma de Barcelona, 08193 Bellaterra, Barcelona, Spain}

\date{\today}

\begin{abstract}
We introduce a Hermitian generalization of Pauli matrices to higher dimensions which is based on Heisenberg-Weyl operators. The complete set of Heisenberg-Weyl observables allows us to identify a real-valued Bloch vector for an arbitrary density operator in discrete phase space, with a smooth transition to infinite dimensions. Furthermore, we derive bounds on the sum of expectation values of any set of anti-commuting observables. Such bounds can be used in entanglement detection and we show that Heisenberg-Weyl observables provide a first non-trivial example beyond the dichotomic case.
\end{abstract}

\pacs{ 07.10.Cm, 	% Micromechanical devices and systems
           03.65.Ta, 	%Foundations of quantum mechanics; measurement theory
           03.65.Ud 	%Entanglement and quantum nonlocality (e.g. EPR paradox, Bell's inequalities, GHZ states, etc.) 
           %42.50.Xa 	%Optical tests of quantum theory 
          %  71.55.-i,     	% Impurity and defect centers
          %  42.50.Dv   	% Quantum state engineering and measurements (squeezing papers)
        % 42.50.Wk,  	% Mechanical effects of light on material media, microstructures and particles
        %42.50.Lc,   	% Quantum fluctuations, quantum noise, and quantum jumps
        %03.67.-a,    	% Quantum information,
        % 03.67.Bg  	% Entanglement production,
           }
\maketitle 

\emph{Introduction.} The Bloch representation is a cornerstone of analyzing the characteristics of quantum systems. It was first introduced for two-level systems by Bloch \cite{Bloch} and has since been used in a wide variety of settings (for comprehensive reviews consult \cite{Fano,Krammer, Petruccione}). It is usually defined via a decomposition of the density matrix into a complete operator basis. Defining quantum states via the expectation
values of a complete set of measurements gives a
very practical account of their properties. Apart from being intuitive this approach gives convenient solutions for Hamiltonian evolutions (see, e.g., \cite{Eberly}) and has found many applications in entanglement theory \cite{Vicente1,Vicente2,Paterek,Laskowski,Schwemmer,Kloeckl, EntangledBloch}.

However, there is not a unique Bloch decomposition of a given quantum state; this fact may favor a particular representation over another for certain tasks. 
The canonical choice for a complete basis of observables is usually given by the so-called generalized Gell-Mann matrices, generators of the special unitary group [SU($d$)].  Being a natural choice for higher-dimensional spin representations they have been extensively used in parametrizations of corresponding density matrices \cite{Petruccione, Spengler} and in entanglement detection. Other choices, such as the Heisenberg-Weyl (HW) operators, and the non-Hermitian generalization of the 1/2-spin Pauli operators, have also been explored \cite{Krammer, Massar, Ryo, Asadian, Cotfas, Vourdas}. While having some convenient properties, they are unitary, but not Hermitian matrices. Thus the associated Bloch vector itself has imaginary entries that do not correspond to expectation values of physical observables. This makes both the theoretical description and experimental realization more cumbersome and therefore requires more effort to identify the relevant parameters.

In this article we introduce a Hermitian Bloch-basis derived from HW-operators. It conveniently combines multiple desirable properties of Bloch vector parametrizations, allowing a smooth transition to the infinite dimensional limit. We first explore properties such as (anti-) commutativity. We then proceed with the derivation of an inequality which bounds sums of anti-commuting observables with which we show in exemplary cases how one can construct powerful criteria for entanglement detection in this new basis. Finally we present a scheme for practical experimental acquisition through a Ramsey-type measurement.

\emph{Phase-space displacements.} We start our analysis with a short reiteration of the HW-operator basis. The operators $Z=e^{i2\pi Q/d}$ and  $X=e^{-i2\pi P/d}$ describe generalized Pauli ``phase" and ``shift" operators with effect $X\ket{j}=\ket{j+1 \  {\rm mod}  \ d }$ and $Z\ket{j}=e^{i2\pi j/d}\ket{j}$. $Q$ and $P$  are discrete position and momentum operators, respectively, describing a $d\times d$ grid.  $X$ and $Z$  operators are non-commutative in general and obey the relation
\begin{equation}
\label{ZXcomm}
Z^lX^m=X^mZ^l e^{i2\pi l m/d}.
\end{equation}

The unitaries corresponding to discrete phase-space
displacements for $d$-level systems are defined as
\begin{equation}
\label{eq:Dlm}
\mathcal{D}(l,m)=Z^lX^m e^{-i\pi lm/d}.
\end{equation}
Displacement operators hold a number of convenient properties which will be particularly
useful in our analysis. Principal among these is the completeness of displacement operators. That
is, they form a complete non-Hermitian orthogonal basis satisfying the orthogonality condition
\begin{equation}
\text{Tr}\{\mathcal{D}(l,m)\mathcal{D}^\dag(l',m')\}=d\delta_{l,l'}\delta_{m,m'}.
\end{equation}
Therefore, any bounded operator, including density operators $\rho$, can be decomposed into
\begin{equation}
\rho=\dfrac{1}{d}\sum_{l,m=0}^{d-1}\text{Tr}\{\rho\mathcal{D}(l,m)\}\mathcal{D}^\dag(l,m)\equiv \dfrac{1}{d}(\mathbbm{1}+\vec{\xi}\cdot\vec{\mathcal{D}}^\dag),
\end{equation}
from which the Bloch representation is already apparent.

In this formulation, however, the Bloch vector components, $\xi_{lm}=\text{Tr}\{\rho\mathcal{D}(l,m)\}$ are generally complex as the displacement operators are not Hermitian. Therefore we need to determine $(d^2-1)$ complex parameters of the Bloch vector, $\vec{\xi}$,  to fully characterize the density operator.  The obvious question here is, can we find a minimal complete set of $d^2-1$ Hermitian operators whose expectation values with respect to the density operator are sufficient to fully characterize the state. In the following we develop a basis which has the above property.

\emph{HW observable basis.}  The standard Hermitian generalization of Pauli operators used in quantum information theory are called generalized Gell-Mann matrices (GGM) \cite{Krammer}. Alternatively, here we are aiming to identify a minimal and complete set of Hermitian operators constructed from the HW operators, $\mathcal{D}(l,m)$. 

\noindent
\textbf{Ansatz}. Our attempt begins by making an Ansatz solution of the form
\begin{equation}
\label{eq:Q}
\mathcal{Q}(l,m)=\chi\mathcal{D}(l,m)+\chi^*\mathcal{D}^\dag(l,m),
\end{equation} 
which is Hermitian  by construction. In order to form a valid basis of observables
this Ansatz has to satisfy the orthogonality condition
\begin{equation}
\label{eq:orth}
\text{Tr}\{\mathcal{Q}(l,m)\mathcal{Q}(l',m')\}=d\delta_{l,l'}\delta_{m,m'} .
\end{equation}
The above condition is satisfied only for the choice (see Appendix Sec. I)
\begin{equation}
\chi=\dfrac{(1\pm i)}{2}.
\end{equation} 
Therefore, we establish $d^2-1$ orthogonal and traceless observables which are linearly independent. The $d^2-1$ observables plus identity matrix, $\mathcal{Q}(0,0)=\mathbbm{1}_d$  form a basis acting on a $d$ dimensional Hilbert space, and thus provide a Bloch representation of an arbitrary state. This enables us to decompose any density operator in terms of HW observables of the form
\begin{equation}
\rho=\dfrac{1}{d}\sum_{l,m=0}^{d-1} \mean{\mathcal{Q}(l,m)}\mathcal{Q}(l,m).
\end{equation}
 
 This basis simply reduces to the Pauli matrices for $d=2$. We henceforth refer to its elements as ``Heisenberg-Weyl observables". They have distinct properties from those of the GGM matrices which will turn out advantageous in some tasks. First, HW observables in contrast with GGM generically have full-rank, making them more efficient in sparsely characterizing states with a lot of coherence. Secondly, for a suitable parametrization, strict pairwise (anti-)commutativity relations can be obtained which can be applied to entanglement detection and will be demonstrated later . 

For continuous variable systems it is of great practical importance to find operational discretizations for processing quantum information \cite{kett1,krenn}. Notably, HW observables can be systematically extended to the continuous limit of infinite dimensional systems, holding analogous properties and with this also all corresponding results can be extended in this limit, thus providing a natural path towards a discretization of continuous variable systems.

First, let us introduce the compact notation
\begin{equation}
\alpha\equiv\sqrt{\dfrac{\pi}{d}}(m+il)%\equiv\dfrac{1}{\sqrt{2}}(x'+ip'),
\end{equation}
known as the displacement amplitude in discrete phase space. One can also think of $\alpha$ as a real vector in a two dimensional space $\alpha:=(\alpha_R, \alpha_I)$.  We can now write $\mathcal{Q}(\alpha):=\mathcal{Q}(\sqrt{\frac{d}{\pi}}\alpha_I,\sqrt{\frac{d}{\pi}}\alpha_R)$ and thus
\begin{align}
\rho=\dfrac{1}{d}\Big(\mathbbm{1}+\sum_{\alpha\in \mathcal{S}}\mean{\mathcal{Q}(\alpha)}\mathcal{Q}(\alpha)\Big)\,,
\end{align}
where we used $\mathcal{S}:=\{\alpha: \alpha_I=\sqrt{\frac{\pi}{d}}l,\alpha_R=\sqrt{\frac{\pi}{d}}m\}$.

To facilitate a smooth transition to infinite dimensions one can consider $\hat x=Q\sqrt{2\pi/d}$ and $\hat p=P\sqrt{2\pi/d}$ as the position and momentum operators. In this case, $X^m\equiv e^{-ix\hat p}$ indicates position displacement by $x=m\sqrt{2\pi/d}$. Similarly $Z^l\equiv e^{ip \hat x}$ displaces the momentum by $p=l\sqrt{2\pi/d}$. Therefore Eq. \eqref{eq:Dlm} can be rewritten  
\begin{equation}
\label{eq:Dxp}
\mathcal{D}(p,x)\equiv e^{ip\hat x} e^{-ix \hat p} e^{-ix p/2}.
\end{equation}
In the limit $d\rightarrow\infty$ we recover the Heisenberg commutation relation for position and momentum of a continuous variable system, $[\hat x,\hat p]=i$.  We can then use the special form of the Baker-Campbell-Hausdorff formula for exponential operators, i.e., $e^{A+B}=e^A e^Be^{-[A,B]/2}$ where $[A,[A,B]]=0=[B,[A,B]$. Thus, the definition \eqref{eq:Dxp} can be written   $\mathcal{D}(p,x)=e^{ip \hat x-ix \hat p}$, the form of which is valid only in the infinite dimensional limit. An equivalent reformulation of this is $\mathcal{D}(\alpha)=e^{\alpha a^\dag-\alpha^* a}$ with orthogonality condition $\text{Tr}\{\mathcal{D}^\dag(\alpha)\mathcal{D}(\alpha')\}=\pi\delta^2(\alpha-\alpha')$ where $a^\dag(a)$ denotes creation (annihilation) operators of a bosonic mode and $\alpha$ is the displacement amplitude. Therefore, the continuous analog of \eqref{eq:orth} is
\begin{equation}
\text{Tr}\{\mathcal{Q}(\alpha)\mathcal{Q}(\alpha')\}=\pi\delta^2(\alpha-\alpha').
\end{equation}
The discrete-continuous transition is therefore identified by the replacement $\dfrac{1}{d}\sum_\alpha\rightarrow \dfrac{1}{\pi}\int d^2 \alpha$. This shows that our observable basis developed for discrete systems can be systematically extended to continuous variable systems.

\emph{Anti-commutativity}. The very feature of the Pauli operators is the fact that all of them are mutually anti-commuting. Amongst other things this allows for tight bounds on uncertainty relations and can be used in entanglement detection \cite{Guehne, Vicente1,Vicente2,Paterek,Laskowski,Schwemmer,Kloeckl,ACotfried,ACotfried2}. Hence it will be interesting to analyze the (anti-)commutation relations for the HW-observables. From Eq.\eqref{ZXcomm} it follows

\begin{equation}
\mathcal{D}(\alpha)\mathcal{D}(\alpha')=e^{i2\alpha\times\alpha'}\mathcal{D}(\alpha')\mathcal{D}(\alpha)\,,
\end{equation}
obeyed by both discrete and continuous phase space displacement operators. Recall, $\Im (\alpha\alpha'^*):=\alpha\times\alpha'$.
This allows the convenient characterization of commutativity and anti-commutativity among all basis elements. From the above equation it is obvious that any pair of displacement operators satisfying
\begin{equation}
\label{eq:condition}
|\alpha\times\alpha'|=\dfrac{\pi}{2}(2n+1)\leq \dfrac{\pi(d-1)^2}{d},
\end{equation} 
 anti-commutes, and so does the corresponding HW observable pair accordingly, i.e., 
\begin{equation}
\{\mathcal{Q}(\alpha),\mathcal{Q}(\alpha')\}=0,
\end{equation}
which can be easily verified. In the discrete notation this means $|m'l-ml'|=d(2n+1)/2\leq (d-1)^2$.  An interesting observation made from this condition is that anti-commutativity between HW observables cannot be achieved strictly in odd-dimensional systems. This is because the left hand side is always integer while the right hand side can only be integer when $d$ is even.
The corresponding condition for achieving commutativity is $|\alpha \times\alpha'|=\pi n$.

\emph{Anti-commutativity bound and entanglement detection with HW observables.} 
We proceed with presenting a theorem bounding sums of squared expectation values, if the mutual anti-commutators of the observables are small.
In the Appendix section IV we present a detailed proof of the theorem. It generalizes a theorem presented in \cite{ACotfried} (later proven differently in \cite{acb1,acb2}) from dichotomic anti-commuting observables to observables with an arbitrary spectrum and non-vanishing anti-commutators.
\begin{theorem}\label{Anticommutativity Bound}
		Let $\{\lambda_{i}\}_{i \in \mathcal{I}}$ with the index set $\mathcal{I}=\{1,2,\ldots,d^{2}\}$ denote an orthonormal self-adjoint basis $\mathcal{B}$ of a $d$-dimensional Hilbert space $\mathcal{H}$ and $\mathcal{A} \subseteq \mathcal {I}$ refer to a subset of $\mathcal{B}$ such that $\frac{1}{2}\sqrt{\sum_{i \neq j \in \mathcal{A}}\langle\{\lambda_i,\lambda_j\}\rangle^2} \leq \mathcal{K}$. %The complement is denoted $\mathcal{\bar{A}}$. 
Then the corresponding Bloch vector components $c_{i}$ of any density matrix $\rho \in \mathcal{H}$ expressed in $\mathcal{B}$ as $\rho=\sum_{i \in \mathcal{A}}c_{i} \lambda_{i} + \sum_{l \in \bar{\mathcal{A}}} c_{l} \lambda_{l}$ can be bounded by
\begin{equation}
\sum_{i \in \mathcal{A}} c_{i}^{2} \leq \frac{\max_{i \in \mathcal{A}} \langle \lambda_{i}^{2} \rangle + \mathcal{K} }{[\min_{i \in \mathcal{A}} \Tr(\lambda_{i}^{2})]^{2}}
\end{equation}
\end{theorem}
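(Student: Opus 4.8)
The plan is to adapt the classical ``square-the-observable'' argument, used for dichotomic mutually anti-commuting sets, to the present setting of arbitrary spectrum and non-vanishing anti-commutators. I would introduce an auxiliary self-adjoint operator
\begin{equation}
M=\sum_{i\in\mathcal{A}}x_i\,\lambda_i ,
\end{equation}
with real parameters $x_i$ to be fixed at the end, and study the two expectation values $\langle M\rangle$ and $\langle M^2\rangle$ in the state $\rho$. The entire estimate then follows from sandwiching $\langle M^2\rangle$ between an upper bound dictated by the algebra of the $\lambda_i$ and the lower bound furnished by positivity of the variance.

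For the upper bound I would expand $M^2=\sum_{i\in\mathcal{A}}x_i^2\lambda_i^2+\sum_{i\neq j}x_ix_j\lambda_i\lambda_j$ and symmetrise the off-diagonal part into anti-commutators, $\sum_{i\neq j}x_ix_j\lambda_i\lambda_j=\frac{1}{2}\sum_{i\neq j}x_ix_j\{\lambda_i,\lambda_j\}$. Taking expectation values, the diagonal contribution is controlled by $\max_{i\in\mathcal{A}}\langle\lambda_i^2\rangle\sum_i x_i^2$, while the off-diagonal contribution is handled by a Cauchy--Schwarz estimate over the pair index $(i,j)$,
\begin{equation}
\frac{1}{2}\Big|\sum_{i\neq j}x_ix_j\langle\{\lambda_i,\lambda_j\}\rangle\Big|
\le \frac{1}{2}\sqrt{\sum_{i\neq j}(x_ix_j)^2}\,\sqrt{\sum_{i\neq j}\langle\{\lambda_i,\lambda_j\}\rangle^2}
\le \mathcal{K}\sum_{i\in\mathcal{A}}x_i^2 ,
\end{equation}
where I used $\sum_{i\neq j}(x_ix_j)^2\le(\sum_i x_i^2)^2$ together with the hypothesis $\frac{1}{2}\sqrt{\sum_{i\neq j}\langle\{\lambda_i,\lambda_j\}\rangle^2}\le\mathcal{K}$. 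Combining the two pieces yields $\langle M^2\rangle\le(\max_{i}\langle\lambda_i^2\rangle+\mathcal{K})\sum_i x_i^2$.

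The lower bound comes from self-adjointness of $M$, which gives $\langle M^2\rangle\ge\langle M\rangle^2$. The final step is the variational choice $x_i=c_i$. Here the crucial ingredient is the link between a Bloch coefficient and the corresponding expectation value: from $\rho=\sum_k c_k\lambda_k$ and orthogonality one has $\langle\lambda_i\rangle=\Tr(\rho\lambda_i)=c_i\Tr(\lambda_i^2)$. Consequently $\langle M\rangle=\sum_i c_i\langle\lambda_i\rangle=\sum_i\Tr(\lambda_i^2)\,c_i^2\ge[\min_i\Tr(\lambda_i^2)]\sum_i c_i^2$, while $\sum_i x_i^2=\sum_i c_i^2$. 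Stringing the inequalities together gives
\begin{equation}
[\min_{i\in\mathcal{A}}\Tr(\lambda_i^2)]^2\Big(\sum_{i\in\mathcal{A}}c_i^2\Big)^2
\le \langle M\rangle^2\le\langle M^2\rangle
\le\big(\max_{i\in\mathcal{A}}\langle\lambda_i^2\rangle+\mathcal{K}\big)\sum_{i\in\mathcal{A}}c_i^2 ,
\end{equation}
and dividing by $\sum_i c_i^2$ produces exactly the claimed bound.

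I expect the main obstacle to be the off-diagonal estimate together with the normalisation bookkeeping. Concretely, one must organise the Cauchy--Schwarz sum over the pair index cleanly and justify $\sum_{i\neq j}(x_ix_j)^2\le(\sum_i x_i^2)^2$, and then track the trace normalisation $\Tr(\lambda_i^2)$ consistently so that the squared denominator emerges for the right reason: one factor of $\min_i\Tr(\lambda_i^2)$ arises from converting $\langle\lambda_i\rangle$ into $c_i$ inside $\langle M\rangle$, and the second from the quadratic sandwich $\langle M\rangle^2\le\langle M^2\rangle$. A useful consistency check is that for a normalised basis with $\lambda_i^2=\mathbbm{1}$ and $\mathcal{K}=0$ the inequality collapses to the known dichotomic result $\sum_i c_i^2\le1$.
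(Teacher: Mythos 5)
Your proposal is correct and follows essentially the same route as the paper's proof: the paper also takes the observable $\mathcal{O}=\sum_{i\in\mathcal{A}}c_i\lambda_i$ (your $M$ with $x_i=c_i$ fixed from the start), bounds its variance below by zero, controls $\langle\mathcal{O}^2\rangle$ by symmetrizing the cross terms into anticommutators and applying Cauchy--Schwarz with $\sqrt{\sum_{i\neq j}c_i^2c_j^2}\leq\sum_i c_i^2$, and evaluates $\langle\mathcal{O}\rangle=\sum_i c_i^2\Tr(\lambda_i^2)$ via orthogonality before extracting the bound. Your variational phrasing and the sanity check at $\mathcal{K}=0$, $\lambda_i^2=\mathbbm{1}$ are cosmetic additions; there is no substantive difference and no gap.
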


One way to detect entanglement using anti-commutativity can be accomplished by first identifying a set of nonzero Bloch vector entries of a multipartite quantum state with anticommuting reductions across the partition one is interested in. That is,  we are looking for a partition $A|\bar{A}$ and a set $\beta=\beta_A\cup\beta_{\bar{A}}$ with elements $\tau_\beta=\text{Tr}(\rho \lambda_{i}\otimes \lambda_{j})$, where $i\in\beta_A$, $j\in\beta_{\bar{A}}$ and $\lambda_i(\lambda_j)$ are arbitrary observables acting on subsystem $A(\bar{A})$. The sum of moduli of these correlations can be bounded for states, which are product with respect to these partitions as 
%\begin{align}
%\label{eq:criterion}
%\sum_{\beta}|\tau_\beta|\leq\sqrt{\smash[b]{|\sum_{\beta}\underbrace{\text{Tr}(\rho\bigotimes_{i\in A}\mathcal{Q}_{\alpha_i(\beta)})}_{\tau_\beta^A}|^2|\sum_{\alpha}\underbrace{\text{Tr}(\rho\bigotimes_{i\in \bar{A}}\mathcal{Q}_{\alpha_i(\beta)})}_{\tau_\beta^{\bar{A}}}|^2}}\,.
%\end{align}
\begin{align}
\text{Tr}(\rho_A\otimes\rho_{\bar{A}}\lambda_{i}\otimes \lambda_{j})=\text{Tr}(\rho_A \lambda_{i})\text{Tr}(\rho_{\bar{A}} \lambda_{j})\,,
\end{align}
and $|\langle u|v\rangle|\leq {\parallel u\parallel}_2 {\parallel v\parallel}_2$. Now, for the sake of convenience, let us assume that all observables $\lambda_{i}$ are anti-commuting (i.e. $\mathcal{K}=0$) and normalized [i.e. $\text{Tr}(\lambda_{i}\lambda_{i'})=d\delta_{i,i'}$]. Then we can make direct use of the anti-commutativity bound to assert that
\begin{align}
\label{eq:criterion}
\sum_{i\in\beta_A} |\text{Tr}(\rho_A \lambda_{i})|^2\leq\max\limits_{i\in\beta_A} \langle \lambda_{i}^{2} \rangle\,,
\end{align}
and analogously for $\bar{A}$. To finish we only need to point out that the original expression, a sum of moduli of expectation values, is convex in the space of density matrices and thus the validity of the inequality for product states translates to a general validity for separable states. The case of non-normalized or only partially anti-commuting observables works analogously.

For the case of HW-observables the anti-commutativity bound can be simplified to
\begin{align}
\sum_{i \in \mathcal{A}} \langle \mathcal{Q}&(\alpha_i)\rangle^{2} \leq q^2_{\rm max} + \mathcal{K}\label{hwacbound}
\end{align}
where $q_{\rm max}^2=1+\max_{n\in \mathbb{N}}\sin(4\pi n/d)$ is the maximum eigenvalue of a HW observable which is the same for any $\mathcal{Q}(\alpha_j)$ for a given dimension and $\mathcal{K}\leq\frac{1}{2}\sqrt{\sum_{i \neq j \in \mathcal{A}}\left\|\{\mathcal{Q}(\alpha_i),\mathcal{Q}(\alpha_j)\}\right\|^{2}_{\infty}}$. The case of non-vanishing anti-commutators has also been studied for dichotomic observables in the context of uncertainty relations \cite{Jed}. The central quantity $\mathcal{K}$ is proportional to the $2$-norm of the ``anti-commutator matrix'' introduced therein.  A case of particular interest is of course given by exact anti-commutativity, i.e.~ $\mathcal{K}=0$. In the Appendix (Sec. IV B) we present the proof that we have identified the maximal set of anti-commuting operators in the HW basis, i.e., that no more than three anti-commuting HW--observables exist.

\textit{Example.} To illustrate this method in an exemplary case let us turn to qudit systems with the maximally entangled state defined as 
\begin{equation}
\label{eq:quditEnt}
\ket{\phi_d}=\frac{1}{\sqrt{d}}\sum_{j=0}^{d-1}\ket{j}\ket{j}.
\end{equation}
This important class of entangled states in quantum information is a maximal resource for many tasks. Its Bloch decomposition in terms of HWOs is simply given by 
\begin{align}
 \proj{\phi_d}=\dfrac{1}{d^2}\Big(\mathbbm{1}\otimes\mathbbm{1}+\sum_{\alpha\in \mathcal{S}}\mathcal{Q}(\alpha)\otimes\mathcal{Q}(\alpha)^*\Big)
\end{align}
where  $\mathcal{Q}(\alpha)^*=\mathcal{Q}(-\alpha^*)$ denotes the complex conjugate. From above Bloch decomposition it follows that the expectation value of the correlations are all equal to $1$, and this means that measuring only three anti-commuting local observables for each party is sufficient to violate the upper bound and thus detect entanglement. The violation is obviously enhanced with three pairwise anti-commuting observables whose respective amplitudes fulfill the constraint
$|\alpha_1\times\alpha_2|=|\alpha_2\times\alpha_3|=|\alpha_3\times\alpha_1|=\pi/2(2n+1)$ yielding a general recipe for finding  three pairwise anti-commuting observables.  In this case, the criterion \eqref{eq:criterion} may written as
\begin{equation}
\sum_{i=1}^3\mean{\mathcal{Q}(\alpha_i)\otimes\mathcal{Q}(\alpha_i)^*}\stackrel{\text{DV}}\leq  {q^2_{\rm max}} \stackrel{\text{CV}}\leq 2,
\end{equation}
with respective upper bounds  on separable states for discrete (DV) and continuous variable (CV) cases.
In the continuous limit ($d\rightarrow \infty$) the above entangled state becomes a perfectly correlated Einstein-Podolski-Rosen(EPR) entangled state \cite{EPR} which is equal to an infinitely squeezed two-mode squeezed state, i.e. $\dfrac{1}{\sqrt\pi}\int_\mathbb{R} dx\ket{x}\ket{x}=\dfrac{1}{\sqrt\pi}\int_\mathbb{R} dx\ket{p}\ket{-p}$ with continuous Bloch decomposition  $\dfrac{1}{\pi^2}\int d\alpha^2\mathcal{Q}(\alpha)\otimes\mathcal{Q}(\alpha)^*$. The associated set of three pairwise anti-commuting observables in this limit is simply given by a symmetric case of three equiangular amplitudes with equal lengths $|\alpha_j|=\sqrt{\pi/\sqrt{3}}\simeq 1.34$ mutually separated by an angle $2\pi/3$. 
In comparison, the corresponding correlations with respect to the generalized Gell-Mann basis are all equal to $2/d$ \cite{Krammer}. Thus, the number of required measurements in order to detect entanglement scales with $d$ making it impractical in high dimensions. This is an example which clearly demonstrates the advantage of the HW observables in high-dimensional entanglement detection.  For more details see Appendix section IV where we give more explicit examples that the generalized anticommutativity bound can be used to detect entanglement even if the observables are nondichotomic, therefore generalizing the results obtained in \cite{ACotfried,acb1,acb2}. 

While the anticommuting elements, within one local system, are limited, tensor product bases consisting of commuting and anticommuting sets yield desired large sets. The product will be again anti-commuting as long as an odd number of factors is anticommuting. Therefore, once commuting and anti-commuting local basis elements are found, extending to larger particle numbers is possible by straightforward combinatorial calculations. 
\begin{figure}[t]
\begin{center}
\includegraphics[width=1\columnwidth]{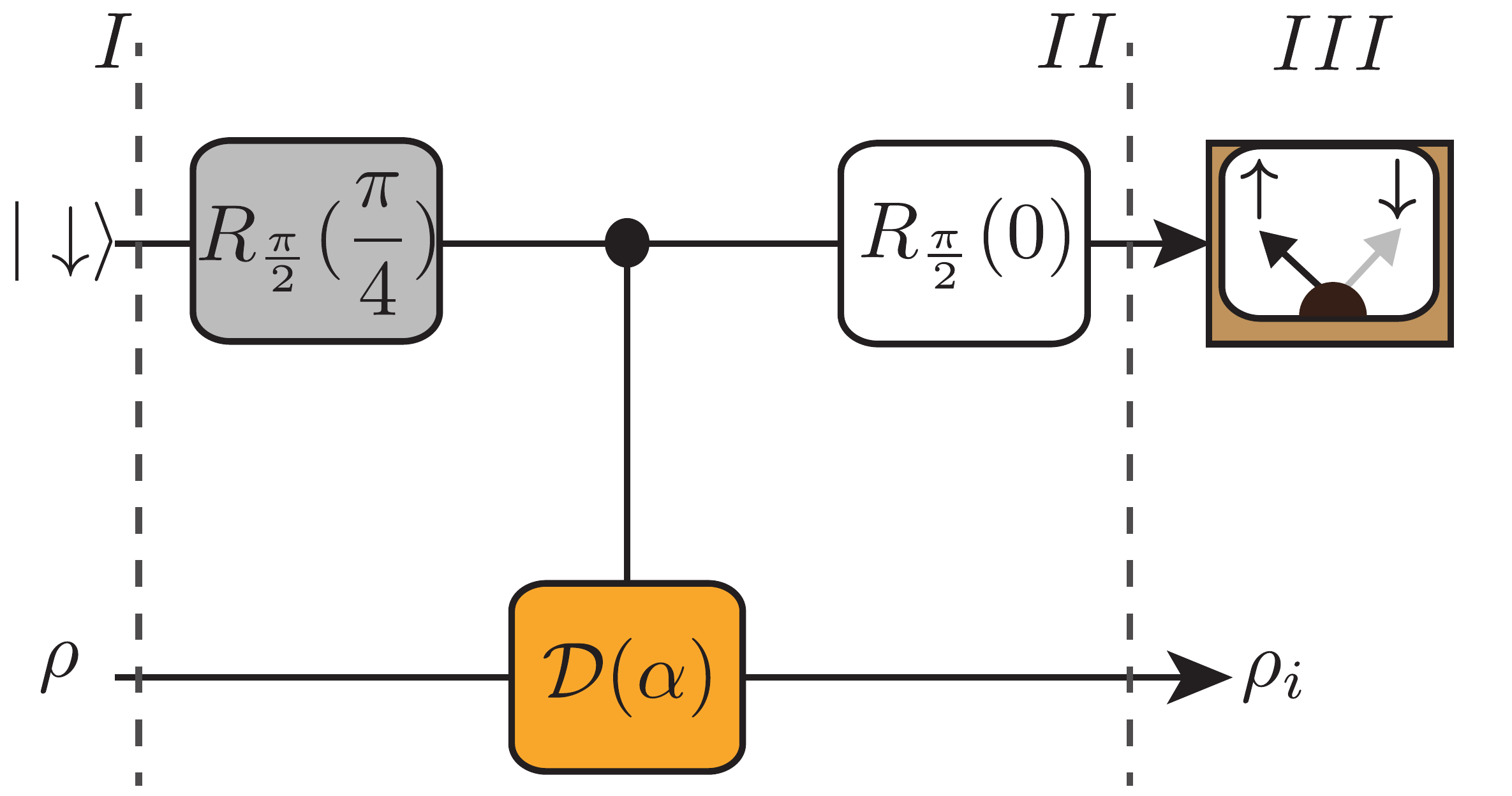}
\end{center}
\caption{A quantum circuit representing a Ramsey-type cycle for measuring HW observables via an ancillary readout qubit. The scheme consists of three steps. See text for detail. }
\label{Ramsey}
\end{figure}

\emph{Measuring HW observables}. One of the main advantages of the HW-observables is the fact that it is possible to measure their mean values via dichotomic  statistics of the measurements performed on a single qubit ancilla, coupled to the system of interest \cite{MeanCVqubit, Knill}. This is favored in some cases where qubit state read out and sequential measurements are efficient \cite{AsadianLG}. In this method the concept of generalized (or sometimes called indirect) measurement procedure is used. This type of measurement generally involves three elements. The
first element is the system of interest $S$, here the $d$-dimensional system, from which we want to gain
knowledge about its HW observables. The second element is a probe system, here an ancillary
read-out qubit. We let the two elements interact for a certain period of time and thereby 
correlations between $S$ and qubit are built up. A measurement apparatus, the
third element, is used to read out the qubit's state at the end of the interaction time. The
aim of this scheme is to obtain information on the state of $S$ via the state of the probe,
encoding the necessary information about $S$. 
This strategy can be applied using a Ramsey-type measurement scheme, a sequence of three essential steps (see fig.~\ref{Ramsey}):

\noindent
(I). Qubit and $S$ are initialized as  $\rho\otimes\proj{\downarrow}$. 

\noindent
(II). A fast rotation with suitable angles, $R_{\pi/2}(\pi/4)$, is performed on the qubit. Here $R_{\pi/2}(\varphi)$ denote $\pi/2$-rotations of a qubit with an adjustable phase $\varphi$. In the basis $\{ |\downarrow\rangle, |\uparrow\rangle\}$ they are defined as 
\begin{equation}
R_{\pi/2}(\varphi)=\left(\begin{array}{cc}  1 & e^{i\varphi} \\  -e^{-i\varphi} & 1\end{array}\right) \,.
\end{equation}
Then the qubit interacts with $S$ for a period of time generating a conditional displacement $U_\alpha=\mathbbm{1}\otimes\proj{\downarrow}+\mathcal{D}(\alpha)\otimes\proj{\uparrow}$ followed by the rotation $R_{\pi/2}(0)$. Thus the resulting evolution of the composite system is $U_{RM}=R_{\pi/2}(0)U_\alpha R_{\pi/2}(\pi/4)$.

\noindent
(III). A projective measurement on the probe qubit in the basis $\{\ket{\uparrow}, \ket{\downarrow}\}$ is performed. 
As a result of the measurement back action, the state of $S$ is projected onto one of the conditioned 
states $\rho_i$ depending on the outcome of the measurement. 

It is worth remarking that the implementation of the associated positive operator-valued measure (POVM) elements is illustrated by Neumark's theorem \cite{Naimark}, where the mean values of HW observables are obtained via two-valued POVM elements $E_j= M_j^\dag M_j=\bra{j}U_{RM}\ket{\downarrow}$, with $j\in\{\uparrow, \downarrow\}$ and $E_\uparrow+E_\downarrow=\mathbbm{1}$. These POVM elements are realized by the qubit measurement with corresponding probability $p_j=\text{Tr}\{\rho E_j\}$. 
Therefore we have,
\begin{equation}
\mean{Z}\equiv p_\uparrow-p_\downarrow=\text{Tr}\{\rho E_\uparrow\}-\text{Tr}\{\rho E_\downarrow\}=\dfrac{\text{Tr}\{\rho\mathcal{Q}( \alpha)\}}{\sqrt{2}}.
\end{equation}
This is a simple procedure in which the mean value of a HW observable is equal to the mean value of the qubit observable $Z$, up to a proportionality constant $\sqrt{2}$.
More importantly, the scheme with required length of  displacement amplitudes has already been implemented very efficiently in recent trapped ion experiments \cite{home}.

In hybrid system settings this approach offers an alternative procedure for experimental reconstruction of the quantum state via measuring an ancillary qubit system where the required interaction to implement the conditional displacement is of the simple form $( \lambda^* a +\lambda a^\dag )\proj{\uparrow}$, describing a linear coupling between the field quadrature and the qubit state with coupling strength $\lambda$ \cite{Merzbacher, Monroe}.  The other well-known scheme is based on displaced parity measurements  \cite{Davidovich}. In this approach also a Ramsey-type measurement is performed on an ancillary two-level atom coupled to a field. But it is experimentally more expensive, because in addition to the aforementioned coupling needed for displacing the field's state, a dispersive interaction of the form $g a^\dag a\proj{\uparrow}$ is needed to implement the parity measurement. This type of coupling is usually hard to realize specifically 
for more massive systems such as nano-mechanical resonators. Therefore, state estimation via HW observables offers an experimentally accessible procedure to a wider range of experimental setups. It is worth adding that, alternatively, continuous HWOs can be measured via homodyne measurements of the field quadratures by adapting the Vogel and Risken scheme \cite{Risken, Homodyne}.

% As a remark we can also point out that there is no anti-commutative pair of displaced parity operators, and therefore one cannot make use of the entanglement detection criterion in this basis [SuppInfo].

\emph{Conclusions}. In the present work we put forward a Hermitian basis acting on $d$ dimensional Hilbert space. This basis holds distinctive features from those of GGM basis making it particularly useful in certain applications such as entanglement detection and tomography of high-dimensional systems. We give a thorough characterization of a number of relevant properties of this  basis, including the spectrum and (anti)-commutativity properties. Together with a theorem bounding the expectation value of general anti-commuting observables we furthermore demonstrate the usefulness of the HW-basis in entanglement detection by deriving a general method to obtain linear entanglement witnesses. Interesting roads for the future would be to investigate whether the set of observables presented in this work can simplify the analysis stabilizer systems defined in terms of HW operators \cite{kontext}. And since our construction of Bloch vectors is closely related to the phase-space picture it would be interesting to investigate applications for significant problems where phase space formulations are exploited, such as finding magic states \cite{magic}, determining the Wigner function \cite{Wootters,wigner} or the problem of finding symmetric informationally complete POVM \cite{sicpovm}.

\textit{Acknowledgements} We are grateful to S. Altenburg, F. Bernards, C. Budroni, D. Gross, F. Steinhoff, R. Werner, A. Winter and H. Zhu for productive discussions. We are furthermore very grateful to Otfried G\"uhne for showing us a proof of the fact that there cannot be more than three anti-commuting HW-observables. A.A. acknowledges the support
by Erwin Schr\"odinger Stipendium No. J3653-N27. MH and PE were supported by the European Commission (STREP ``RAQUEL''), CK by the ERC 267386 (``IRQUAT'') and MH, PE and CK by the Spanish MINECO, projects FIS2008-01236 and FIS2013-40627-P, with the support of FEDER funds, and by the Generalitat de Catalunya CIRIT, project 2014-SGR-966. PE furthermore acknowledges funding from the Swiss National Science Foundation (SNF) and the National Centres of Competence in Research "Quantum Science and Technology" (QSIT). MH furthermore acknowledges funding from the Juan de la Cierva fellowship (JCI 2012-14155), the Swiss National Science Foundation (AMBIZIONE PZ00P2$\_$161351) and the Austrian Science Fund (FWF) through the START project Y879-N27.

\bibliography{Bloch}
\begin{widetext}
\section{Appendix}

The appendix provides detailed derivations of the results from the main text including an explicit matrix representation (for d=3 and d=4) of the HW observables. Furthermore we give a detailed proof for Theorem 1 in the main text.

\subsection{I. Definition of the observables}
\label{sec:Ansatz}
We want to find $\chi$ defined in (5) of the main text such that 
\begin{equation}
\text{Tr}\{\mathcal{Q}(l,m)\mathcal{Q}(l',m')\} =d \delta_{l,l'}\delta_{m,m'}.
\end{equation}
Therefore we expand 
\begin{align*}
\label{eq:orth}
&\text{Tr}\{\mathcal{Q}(l,m)\mathcal{Q}(l',m')\} \\
&=|\chi|^2\text{Tr}\{\mathcal{D}(l,m)\mathcal{D}^\dag(l',m')\}+|\chi|^2\text{Tr}\{\mathcal{D}^\dag(l,m)\mathcal{D}(l',m')\} \\
& \ \  \ \ \ \  \chi^2\text{Tr}\{\mathcal{D}(l,m)\mathcal{D}(l',m')\}+{\chi^*}^2\text{Tr}\{\mathcal{D}^\dag(l,m)\mathcal{D}^\dag(l',m')\} \\
&=2d|\chi|^2\delta_{l,l'}\delta_{m,m'} \\
&+d e^{i\pi(l-m)} (\chi^2 e^{-i\pi d}+{\chi^*}^2 e^{i\pi d})\delta_{l+l',d}\delta_{m+m',d}=d \delta_{l,l'}\delta_{m,m'}.
\end{align*}
In order to satisfy the orthogonality condition we need to have that
\begin{equation}
|\chi|^2=\dfrac{1}{2} \ \ \ \ \text{and} \ \ \  \chi^2=- {\chi^*}^2 .
\end{equation}
Simple algebra leads to the solution $\chi=(1\pm i)/2$.

\subsection{II. Maximum eigenvalue of an HW observable}
The expectation value of a HW observable squared is given by
\begin{align*}
\sqrt{\mean{\mathcal{Q}^2(l,m)}}=\sqrt{1+\Im \mean{\mathcal{D}(2l,2m)}}.
\end{align*}
As the displacement operator is a unitary operator its eigenvalue is bounded by 1 and thus we need to take the imaginary part of the displacement operator. Therefore we have that
\begin{equation}
\sqrt{\mean{\mathcal{Q}^2(l,m)}}\leq |q_{\rm max}| \leq \sqrt{ 2}. \label{qmaxev}
\end{equation}
where $q_{\rm max}$ is the maximum eigenvalue.
In the special case $d=2$, $\Im \mathcal{D}(2l,2m)=0$, which simply the case where the HW observables reduce to Pauli operators. For higher dimensions however $\Im \mathcal{D}(2l,2m)$ is non-zero and  the maximum eigenvalue is bounded by $\sqrt{2}$. The absolute values of the eigenvalues for $\sqrt{\mathcal{Q}^2(l,m)}$ for all HW observables are
\begin{equation}
|q_n|=\sqrt{1+\sin\dfrac{4\pi n}{d}}
\label{HWspectrum}
\end{equation}
for $n=0, \cdots, d-1$.

\subsection{III. Explicit representation}
For the convenience of the reader here an explicit matrix representation of the HW observables is given. While for higher dimension the compact notation $ \alpha=\sqrt{\dfrac{\pi}{d}}(m+il)$ introduced in the main text becomes increasingly handy, it might seem unpractical in low dimension. We therefore stick to the notation provided in eq. (5) of the main text for presenting the matrix representation in d=3 and d=4.

\subsubsection{d=3}
For  the sake of a compact notation let $\chi=(1+ i)/2$ and $\omega = e^{\frac{2 i \pi}{3}}$.
\begin{align}
&\mathcal{Q}\text{(0,0)=} \begin{pmatrix}
 1 & 0 & 0 \\
 0 & 1 & 0 \\
 0 & 0 & 1 \\
\end{pmatrix}
\quad
\mathcal{Q}\text{(0,1)=} \begin{pmatrix}
0 & \chi^{*}& \chi\\
 \chi& 0 & \chi^{*}\\
 \chi^{*}& \chi& 0 \\
\end{pmatrix}
\quad
\mathcal{Q}\text{(0,2)=} \begin{pmatrix}
 0 & \chi& \chi^{*}\\
 \chi^{*}& 0 & \chi\\
 \chi& \chi^{*}& 0 \\
 \end{pmatrix}\nonumber\\
&\mathcal{Q}\text{(1,0)=}  \begin{pmatrix}
 \chi + \chi^{*}& 0 & 0 \\
 0 &  \chi \omega + \chi^{*}\omega^{*} & 0 \\
 0 & 0 &\chi\omega^{*}+\chi^{*}\omega \\
\end{pmatrix}
\quad
\mathcal{Q}\text{(1,1)=}  \begin{pmatrix}
 0 &-\chi^{*}\omega& - \chi \omega \\
-\chi\omega^{*}& 0 & - \chi^{*}\\
  -\chi^{*}\omega^{*}& -\chi & 0 \\
\end{pmatrix}\nonumber\\
&\mathcal{Q}\text{(1,2)=}  \begin{pmatrix}
  0 & \chi\omega^{*} &\chi^{*}\omega^{*}\\
 \chi^{*}\omega& 0 &\chi \\
\chi \omega  & \chi^{*} & 0 \\
\end{pmatrix}
\quad
\mathcal{Q}\text{(2,0)=}  \begin{pmatrix}
 \chi + \chi^{*}& 0 & 0 \\
 0 & \chi\omega^{*}+\chi^{*}\omega  & 0 \\
 0 & 0 &   \chi \omega + \chi^{*}\omega^{*} \\
\end{pmatrix}\nonumber\\
&\mathcal{Q}\text{(2,1)=}\begin{pmatrix}
 0 &\chi^{*}\omega^{*}& \chi\omega^{*}  \\
\chi \omega & 0 & \chi^{*}\\
\chi^{*}\omega& \chi & 0 \\
\end{pmatrix}
\mathcal{Q}\text{(2,2)=} \begin{pmatrix}
 0 &  \chi \omega & \chi^{*}\omega\\
 \chi^{*}\omega^{*}& 0 & \chi \\
  \chi \omega^{*} &\chi^{*} & 0 \\
\end{pmatrix}
\end{align}

\subsubsection{d=4}

\begin{align}
&\mathcal{Q}\text{(0,0)=} \begin{pmatrix}
 1 & 0 & 0 & 0 \\
 0 & 1 & 0 & 0 \\
 0 & 0 & 1 & 0 \\
 0 & 0 & 0 & 1 \\
\end{pmatrix}
\quad
\mathcal{Q}\text{(0,1)=}\begin{pmatrix}
 0 &\chi^{*} & 0 &\chi \\
\chi & 0 &\chi^{*} & 0 \\
 0 &\chi & 0 &\chi^{*} \\
\chi^{*} & 0 &\chi & 0 \\
\end{pmatrix}
\quad
\mathcal{Q}\text{(0,2)=} \begin{pmatrix}
 0 & 0 & 1 & 0 \\
 0 & 0 & 0 & 1 \\
 1 & 0 & 0 & 0 \\
 0 & 1 & 0 & 0 \\
\end{pmatrix}\nonumber\\
&\mathcal{Q}\text{(0,3)=}\begin{pmatrix}
 0 &\chi & 0 &\chi^{*} \\
\chi^{*} & 0 &\chi & 0 \\
 0 &\chi^{*} & 0 &\chi \\
\chi & 0 &\chi^{*} & 0 \\
\end{pmatrix}
\quad
\mathcal{Q}\text{(1,0)=} \begin{pmatrix}
 1 & 0 & 0 & 0 \\
 0 & -1 & 0 & 0 \\
 0 & 0 & -1 & 0 \\
 0 & 0 & 0 & 1 \\
\end{pmatrix}
\quad
\mathcal{Q}\text{(1,1)=}\frac{1}{\sqrt{2}} \begin{pmatrix}
 0 & -i& 0 & 1\\
 i& 0 & -1& 0 \\
 0 & -1& 0 & i\\
 1& 0 & -i& 0 \\
\end{pmatrix}\nonumber\\
& \mathcal{Q}\text{(1,2)=} \begin{pmatrix}
 0 & 0 & -i & 0 \\
 0 & 0 & 0 & i \\
 i & 0 & 0 & 0 \\
 0 & -i & 0 & 0 \\
\end{pmatrix}
\quad
\mathcal{Q}\text{(1,3)=}\frac{1}{\sqrt{2}} \begin{pmatrix}
 0 & -i& 0 & -1\\
 i& 0 & 1& 0 \\
 0 & 1& 0 & i\\
 -1& 0 & -i& 0 \\
\end{pmatrix}
\quad
\mathcal{Q}\text{(2,0)=} \begin{pmatrix}
 1 & 0 & 0 & 0 \\
 0 & -1 & 0 & 0 \\
 0 & 0 & 1 & 0 \\
 0 & 0 & 0 & -1 \\
\end{pmatrix}\nonumber\\
&\mathcal{Q}\text{(2,1)=}  \begin{pmatrix}
 0 & -\chi & 0 &\chi^{*} \\
 -\chi^{*} & 0 &\chi & 0 \\
 0 &\chi^{*} & 0 & -\chi \\
\chi & 0 & -\chi^{*} & 0 \\
\end{pmatrix}
\quad
\mathcal{Q}\text{(2,2)=} \begin{pmatrix}
 0 & 0 & -1 & 0 \\
 0 & 0 & 0 & 1 \\
 -1 & 0 & 0 & 0 \\
 0 & 1 & 0 & 0 \\
\end{pmatrix}
\quad
\mathcal{Q}\text{(2,3)=} \begin{pmatrix}
 0 & -\chi^{*} & 0 &\chi \\
 -\chi & 0 &\chi^{*} & 0 \\
 0 &\chi & 0 & -\chi^{*} \\
\chi^{*} & 0 & -\chi & 0 \\
\end{pmatrix}\nonumber\\
&\mathcal{Q}\text{(3,0)=} \begin{pmatrix}
 1 & 0 & 0 & 0 \\
 0 & 1 & 0 & 0 \\
 0 & 0 & -1 & 0 \\
 0 & 0 & 0 & -1 \\
\end{pmatrix}
\quad
\mathcal{Q}\text{(3,1)=}\frac{1}{\sqrt{2}} \begin{pmatrix}
 0 & -1& 0 & -i\\
 -1& 0 & -i& 0 \\
 0 & i& 0 & 1\\
 i& 0 & 1& 0 \\
\end{pmatrix}
\quad
\mathcal{Q}\text{(3,2)=} \begin{pmatrix}
 0 & 0 & i & 0 \\
 0 & 0 & 0 & i \\
 -i & 0 & 0 & 0 \\
 0 & -i & 0 & 0 \\
\end{pmatrix}\nonumber\\
&\mathcal{Q}\text{(3,3)=} \frac{1}{\sqrt{2}} \begin{pmatrix}
 0 & 1& 0 & -i\\
 1& 0 & -i& 0 \\
 0 & i& 0 & -1\\
 i& 0 & -1& 0 \\
\end{pmatrix}
\end{align}

\subsection{IV. Anticommutativity Bound}
\setcounter{theorem}{0}
\begin{theorem}
Let $\{\lambda_{i}\}_{i \in \mathcal{I}}$ with the index set $\mathcal{I}=\{1,2,\ldots,d^{2}\}$ denote an orthonormal self-adjoint basis $\mathcal{B}$ of a $d$-dimensional Hilbert space $\mathcal{H}$ and $\mathcal{A} \subseteq \mathcal {I}$ refer to a subset of $\mathcal{B}$ such that $\frac{1}{2}\sqrt{\sum_{i \neq j \in \mathcal{A}}\langle\{\lambda_i,\lambda_j\}\rangle^2}  \leq \mathcal{K}$. 
Then the corresponding Bloch vector components $c_{i}$ of any density matrix $\rho \in \mathcal{H}$ expressed in $\mathcal{B}$ as $\rho=\sum_{i \in \mathcal{A}}c_{i} \lambda_{i} + \sum_{l \in \overline{\mathcal{A}}} c_{l} \lambda_{l}$ can be bounded by
\begin{equation}
\sum_{i \in \mathcal{A}} c_{i}^{2} \leq \frac{\max_{i \in \mathcal{A}} \langle \lambda_{i}^{2} \rangle  +\mathcal{K} }{(\min_{i \in \mathcal{A}} \Tr(\lambda_{i}^{2}))^{2}}
\end{equation}
\end{theorem}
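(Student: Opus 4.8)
The plan is to collapse the statement to a single inequality for the sum of squared expectation values and then extract that inequality from the non-negativity of a variance. First I would record how the Bloch components relate to expectation values: orthogonality of the basis, $\Tr(\lambda_i\lambda_j)=\Tr(\lambda_i^2)\delta_{ij}$, gives $c_i=\mean{\lambda_i}/\Tr(\lambda_i^2)$ with $\mean{\lambda_i}=\Tr(\rho\lambda_i)$ real (both $\rho$ and $\lambda_i$ being self-adjoint). Hence $c_i^2=\mean{\lambda_i}^2/\Tr(\lambda_i^2)^2\le\mean{\lambda_i}^2/N^2$ with $N:=\min_{i\in\mathcal{A}}\Tr(\lambda_i^2)$, so that $\sum_{i\in\mathcal{A}}c_i^2\le N^{-2}\sum_{i\in\mathcal{A}}\mean{\lambda_i}^2$. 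This isolates the real content of the theorem, namely proving $V:=\sum_{i\in\mathcal{A}}\mean{\lambda_i}^2\le \max_{i\in\mathcal{A}}\mean{\lambda_i^2}+\mathcal{K}$; the normalization step at the end then restores the denominator $N^2$.

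The central idea is to introduce the self-adjoint operator $S:=\sum_{i\in\mathcal{A}}\mean{\lambda_i}\,\lambda_i$. Its expectation telescopes, $\mean{S}=\sum_{i\in\mathcal{A}}\mean{\lambda_i}^2=V$, so the quantity to be bounded appears directly. I would then invoke the non-negativity of the variance of $S$ in the state $\rho$, i.e.\ $\mean{S}^2\le\mean{S^2}$, valid for any self-adjoint $S$ and any density matrix. Expanding the square and splitting the double sum into diagonal and off-diagonal parts, symmetrizing the latter to produce anticommutators, gives $\mean{S^2}=\sum_{i\in\mathcal{A}}\mean{\lambda_i}^2\mean{\lambda_i^2}+\tfrac12\sum_{i\neq j\in\mathcal{A}}\mean{\lambda_i}\mean{\lambda_j}\mean{\{\lambda_i,\lambda_j\}}$. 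Thus $V^2\le\mean{S^2}$ is exactly an inequality relating $V$ to the diagonal spectral data and the anticommutators.

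It then remains to bound the two terms. The diagonal term is at most $(\max_{i\in\mathcal{A}}\mean{\lambda_i^2})\,V$ by pulling out the maximum. For the off-diagonal term I would apply Cauchy--Schwarz to the pair-indexed sum, obtaining $\tfrac12\big|\sum_{i\neq j}\mean{\lambda_i}\mean{\lambda_j}\mean{\{\lambda_i,\lambda_j\}}\big|\le\big(\tfrac12\sqrt{\sum_{i\neq j}\mean{\{\lambda_i,\lambda_j\}}^2}\big)\sqrt{\sum_{i\neq j}\mean{\lambda_i}^2\mean{\lambda_j}^2}$. The first parenthesized factor is $\le\mathcal{K}$ by hypothesis, while $\sqrt{\sum_{i\neq j}\mean{\lambda_i}^2\mean{\lambda_j}^2}\le\sqrt{\sum_{i,j}\mean{\lambda_i}^2\mean{\lambda_j}^2}=\sum_i\mean{\lambda_i}^2=V$, so the off-diagonal term is bounded by $V\mathcal{K}$. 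Combining gives $V^2\le(\max_{i\in\mathcal{A}}\mean{\lambda_i^2}+\mathcal{K})\,V$; dividing by $V$ (the claim being trivial when $V=0$) and feeding the result back through the normalization step of the first paragraph establishes the bound.

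The routine ingredients are the variance inequality and the Cauchy--Schwarz estimate; the one genuinely creative choice — and the step I expect to carry the whole argument — is the definition $S=\sum_i\mean{\lambda_i}\lambda_i$, engineered so that $\mean{S}$ is precisely the target sum $V$ while $\mean{S^2}$ automatically separates into the eigenvalue contribution and the anticommutator contribution controlled by $\mathcal{K}$. The only point demanding care is verifying that dropping the $i=j$ restriction in the first Cauchy--Schwarz factor merely enlarges the sum to the clean square $V^2$, which is exactly what lets one factor of $V$ cancel and leaves the linear bound $V\le\max_{i\in\mathcal{A}}\mean{\lambda_i^2}+\mathcal{K}$.
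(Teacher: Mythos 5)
Your proposal is correct and is essentially the paper's own proof: the paper defines $\mathcal{O}=\sum_{i\in\mathcal{A}}c_i\lambda_i$, invokes $(\Delta\mathcal{O})^2\geq 0$, symmetrizes $\langle\mathcal{O}^2\rangle$ into a diagonal part (bounded by the maximum of $\langle\lambda_i^2\rangle$) plus anticommutator terms bounded by the identical Cauchy--Schwarz step yielding $(\sum_i c_i^2)\mathcal{K}$, and handles normalization via $\min_{i\in\mathcal{A}}\Tr(\lambda_i^2)$. Your operator $S=\sum_{i\in\mathcal{A}}\langle\lambda_i\rangle\lambda_i$ is just a rescaling of $\mathcal{O}$ (proportional to it whenever all $\Tr(\lambda_i^2)$ coincide, as for the HW basis), so extracting the normalization first and proving the clean intermediate bound $V\leq\max_{i\in\mathcal{A}}\langle\lambda_i^2\rangle+\mathcal{K}$ is only a cosmetic reorganization of the same argument.
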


\begin{proof}
Any $\rho \in \mathcal{H}$ can be expressed in an orthonormal basis $\mathcal{B}$. The chosen basis $\mathcal{B}$ can always be divided into $\mathcal{A} \cup \overline{\mathcal{A}}$, where the set $\mathcal{A}$ is at worst trivial.

Now consider the observable $\mathcal{O}$
\begin{equation}
\mathcal{O}:=\sum_{i \in \mathcal{A}} c_{i} \lambda_{i}.
\end{equation} 
We trivially have
\begin{equation}
(\Delta \mathcal{O} )^{2} := \langle \mathcal{O}^{2} \rangle - \langle \mathcal{O} \rangle^{2} \geq 0 
\end{equation}
since the variance of any observable is positive.

Inserting for $\mathcal{O}$ we obtain
\begin{align}
\langle \mathcal{O}^{2} \rangle = \langle \sum_{i \in \mathcal{A}} c_{i}^{2} \lambda_{i}^{2} + \sum_{i \neq j \in \mathcal{A}} c_{i} c_{j} \lambda_{i} \lambda_{j} \rangle=\langle \sum_{i \in \mathcal{A}} c_{i}^{2} \lambda_{i}^{2} +\frac{1}{2}\sum_{i \neq j \in \mathcal{A}} c_{i} c_{j} \underbrace{(\lambda_{i} \lambda_{j}+\lambda_{j} \lambda_{i})}_{ \{\lambda_{i}, \lambda_{j}\}} \rangle \leq \nonumber\\ \sum_{i \in \mathcal{A}} c_{i}^{2} \langle \lambda_{i}^{2} \rangle +  \underbrace{\sqrt{\sum_{i \neq j \in \mathcal{A}}c_{i}^2 c_{j}^2}}_{\leq\sum_{i}c_i^2}\underbrace{\frac{1}{2}\sqrt{\sum_{i \neq j \in \mathcal{A}}\langle\{\lambda_i,\lambda_j\}\rangle^2}}_{ \leq \mathcal{K}},
\end{align}
with application of the boundedness of the anti commutator of $\mathcal{A}$, as well as
\begin{equation}
\langle \mathcal{O} \rangle^{2} = (\sum_{i \in \mathcal{A}} c_{i} \Tr(\rho \lambda_{i}))^{2} = (\sum_{j \in \mathcal{A}} \Tr(\sum_{i \in \mathcal{A}} c_{i} \lambda_{i}) c_{j} \lambda_{j}) + \sum_{l \in \overline{\mathcal{A}}} \Tr(\sum_{i \in \mathcal{A}} c_{i} \lambda_{i}) c_{l} \lambda_{l}))^{2}= (\sum_{i \in \mathcal{A}} c_{i}^{2} \Tr(\lambda_{i}^{2}))^{2}
\end{equation}
by use of orthonormality of $\mathcal{B}$. Summing up the above three equations yields
\begin{equation}
0 \leq\langle \mathcal{O}^{2} \rangle - \langle \mathcal{O} \rangle^{2} \leq \sum_{i \in \mathcal{A}} (c_{i}^{2} \langle \lambda_{i}^{2} \rangle +  c_{i}^2 \mathcal{K}) - (\sum_{i \in \mathcal{A}} c_{i}^{2} \Tr(\lambda_{i}^{2}))^{2} \leq \sum_{i \in \mathcal{A}} c_{i}^{2} \Big(\max_{i\in \mathcal{A}}\langle \lambda_{i}^{2} \rangle - (\sum_{i \in \mathcal{A}} c_{i}^{2}) \min_{i\in \mathcal{A}}\Tr(\lambda_{i}^{2})^{2}\Big) + \sum_{i \in \mathcal{A}} c_{i}^2\mathcal{K}.
\end{equation}
thus
\begin{equation}
\sum_{i \in \mathcal{A}} c_{i}^{2} \Big(\max_{i\in \mathcal{A}}\langle \lambda_{i}^{2} \rangle - (\sum_{i \in \mathcal{A}} c_{i}^{2}) \min_{i\in \mathcal{A}}\Tr(\lambda_{i}^{2})^{2}\Big) + \sum_{i \in \mathcal{A}} c_{i}^2\mathcal{K}
\leq
\sum_{i \in \mathcal{A}} c_{i}^{2} \Big(\max_{i\in \mathcal{A}}\langle \lambda_{i}^{2} \rangle - (\sum_{i \in \mathcal{A}} c_{i}^{2}) \min_{i\in \mathcal{A}}\Tr(\lambda_{i}^{2})^{2}  \Big)
\end{equation}
The positivity enforces
\begin{equation}
0 \leq \max_{i\in \mathcal{A}}\langle \lambda_{i}^{2} \rangle - (\sum_{i \in \mathcal{A}} c_{i}^{2}) \min_{i\in \mathcal{A}}\Tr(\lambda_{i}^{2})^{2} +  \mathcal{K} \Longleftrightarrow \sum_{i \in \mathcal{A}} c_{i}^{2} \leq \frac{\max_{i\in \mathcal{A}}\langle \lambda_{i}^{2} \rangle +\mathcal{K}}{\min_{i\in \mathcal{A}}\Tr(\lambda_{i}^{2})^{2}}
\end{equation}
\end{proof}
As it may still be cumbersome to compute $\mathcal{K}$ we point out that $ \mathcal{K}=\frac{1}{2}\sqrt{\sum_{i \neq j \in \mathcal{A}}\langle\{\lambda_i,\lambda_j\}\rangle^2}\leq\frac{1}{2}\sqrt{\sum_{i \neq j \in \mathcal{A}}\left\|\{\lambda_i,\lambda_j\}\right\|^{2}_{\infty}}$. As the HW-Basis obeys $\Tr(\lambda_{i}^{2})^{2}=d^2\,\forall\,\lambda_i$ and the coefficients $c_i$ are related to the Bloch vector components by $c_i=\frac{1}{d}\langle \mathcal{Q}(\alpha)\rangle$ we can conclude with the simple bound
\begin{equation}
\label{k}
\sum_{i \in \mathcal{A}} \langle \mathcal{Q}(\alpha_i)\rangle^{2} \leq 1+ \max_n \left( \sin\dfrac{4\pi n}{d}\right) + \frac{1}{2}\sqrt{\sum_{i \neq j \in \mathcal{A}} \left\|\{\mathcal{Q}(\alpha_i),\mathcal{Q}(\alpha_j)\}\right\|^{2}_{\infty}}
\end{equation}

\subsubsection{A. Entanglement detection examples}
\label{sec:example}

\emph{GHZ(3,4)}.--- Let us turn to tripartite systems of dimension four ("`ququarts"') with the GHZ state defined as $\rho_{GHZ(3,4)}=\frac{1}{4}\sum_{i,j=0}^{3} |iii\rangle\langle jjj|$.
First we choose a set of indices according to ensure anti-commutativity along a partition. For bipartite entanglement detection, we consider the partition into the subsystems $A$ and $BC$. Using the abbreviations for the three measurement settings with amplitudes $\alpha_1=\sqrt{\pi/4}$, $\alpha_2=\sqrt{\pi}i$ and $\alpha_3=\alpha_1+\alpha_2$ obeying (14) from the main text.  The following values for the observables $\mathcal{Q}_{A}(\alpha_1) \otimes \mathbbm{1} \otimes \mathcal{Q}_{C}(\alpha_1)$, $\mathcal{Q}_{A}(\alpha_3) \otimes \mathcal{Q}_{B}(\alpha_2) \otimes \mathcal{Q}_{C}(\alpha_3)$ and $\mathcal{Q}_{A}(\alpha_2) \otimes \mathcal{Q}_{B}(\alpha_2) \otimes \mathcal{Q}_{C}(\alpha_2)$ have the expectation values $1,-1,1$. Note that subsequently the first cut $\mathcal{Q}_{A}$ is anti-commuting as well as $\mathcal{Q}_{B}$ is anti-commuting, while is $\mathcal{Q}_{C}$ commuting, resulting in the second cut $\mathcal{Q}_{B} \otimes \mathcal{Q}_{C}$ being anti-commuting as required. In this particular case 
the maximal eigenvalue used in (\ref{HWspectrum}) is equal to one, thus
\begin{align}
 \nonumber
|\langle \mathcal{Q}_{A}(\alpha_1)\otimes  \mathbbm{1}\otimes \mathcal{Q}_{C}(\alpha_1) \rangle|
+|\langle \mathcal{Q}_{A}(\alpha_3)\otimes \mathcal{Q}_{B}(\alpha_2) \otimes \mathcal{Q}_{C}(\alpha_3) \rangle|
+|\langle \mathcal{Q}_{A}(\alpha_2)\otimes \mathcal{Q}_{B}(\alpha_2)\otimes \mathcal{Q}_{C}(\alpha_2)\rangle|=3\underbrace{\leq}_{SEP}1,
\end{align}
which clearly violates the obtained inequality. If the state would have been separable into $A$ and $BC$, we could have split the commuting tensor products of $ABC$ into the anti commuting parts $A$ and $BC$ for whose the theorem should hold, as well as the bound above following from it.
The entanglement detection is achieved with only three measurements and a very satisfactory noise resistance, allowing us for $\rho_{Noise}=p\rho_{GHZ(3,4)}+\frac{(1-p)}{64}\mathbbm{1}$ to choose as little as $p > \frac{1}{3}$ to prove entanglement.
We point out that many choices of observables would have been possible, although some may lead to less noise resistance. The convenient convexity of the moduli also allows a trivial extension to detect genuine multipartite entanglement. The symmetrized extension of the above observables yields seven elements, all of which have expectation values of either $1$ or $-1$. The straightforward application of the outlined procedure yields a bound of $3$ for biseparable states and thus detection of genuine multipartite entanglement with a noise resistance of $p=\frac{4}{7}$.

\emph{Maximally entangled state in d=9}.--- For systems of dimension nine with the maximally entangled state defined as $\rho_{max}=\frac{1}{9}\sum_{i,j=0}^{8} |ii\rangle\langle jj|$ and the following abbreviations for the five measurement settings with amplitudes $\alpha_1=\sqrt{\pi}i/3$, $\alpha_2=\sqrt{\pi}8i/3$ , $\alpha_3=\sqrt{\pi}4/3$, $\alpha_4=\alpha_1 + \alpha_3$ and $\alpha_5=\alpha_2 + \alpha_3$. We make use of the following three observables $\mathcal{Q}_{A}(\alpha_1) \otimes \mathcal{Q}_{B}(\alpha_2)$, $\mathcal{Q}_{A}(\alpha_3) \otimes \mathcal{Q}_{B}(\alpha_3) $ and $\mathcal{Q}_{A}(\alpha_4) \otimes \mathcal{Q}_{B}(\alpha_5) $ that have the expectation values $1,1,1$.
As for odd dimension there are no HW observables that anticommute (i.e. $k \neq 0$) we use eq. \ref{k} and thus
\begin{align}
|\langle \mathcal{Q}_{A}(\alpha_1)\otimes \mathcal{Q}_{B}(\alpha_2) \rangle|
+|\langle \mathcal{Q}_{A}(\alpha_3)\otimes \mathcal{Q}_{B}(\alpha_3)  \rangle|
+|\langle \mathcal{Q}_{A}(\alpha_4)\otimes \mathcal{Q}_{B}(\alpha_5)\rangle|=3\underbrace{\leq}_{SEP} 2.41987
\end{align}
which again violates the obtained inequality.

\subsubsection{B. An upper bound on the number of exactly anticommuting observables}
In this section, we show that using the phase space-approach one 
can generate at most sets of three observables, which are mutually
anti commuting. In other words, if we consider a set of four observables, 
then they can not be pairwise anti commuting. 

To see this, consider a set of four observables, which are pairwise 
anti commuting. We parameterize the four observables by two-dimensional
real vectors $\vec{A},\vec{B},
\vec{C}$ and $\vec{D}$, describing the displacement in phase space.
The anticommutativity leads to six conditions on the vectors, the
first one reading
\begin{equation}
|\vec{A} \times \vec{B}|= k_1 \frac{\pi}{2} 
\end{equation}
with $k_1$ being an odd integer. The other five conditions are analogous, 
with odd numbers $k_2, \dots, k_6$. Solving now the first five equations
for $\vec{A},\vec{B}, \vec{C}$ and $\vec{D}$ and inserting the resulting
conditions into the sixth one, leads to the insight that the $k_i$ have to
obey:
\begin{equation}
k_1 k_6 + k_2 k_5 - k_3 k_4 = 0.
\end{equation}
This condition, however, can not be fulfilled if all the $k_i$ are odd.

\end{widetext}

\end{document}